\begin{document}%

\newcommand{\ket}[1]{|#1\rangle}
\newcommand{\bra}[1]{\langle#1|}
\newcommand{\inner}[2]{\langle#1|#2\rangle}
\newcommand{\norm}[1]{\| #1\|}
\newcommand{\lr}\longrightarrow
\newcommand{\ra}\rightarrow
\newcommand{\tr}{{\rm Tr}}
\newcommand{\sgn}{{\rm sgn}}
\newcommand{\fsp}{{\rm span}}
\newcommand{\fsup}{{\rm supp}}
\newcommand{\fdg}{{\rm diag}}
\newcommand{\dt}{{\Delta t}}
\newcommand\reals{\mathbb R}
\newcommand\naturals{\mathbb N}
\newcommand{\e}{\varepsilon}

\newtheorem{thm}{Theorem}
\newtheorem{Prop}{Proposition}
\newtheorem{Coro}{Corollary}
\newtheorem{Lemma}{Lemma}
\newtheorem{Def}{Definition}
\newtheorem{Rem}{Remark}

\title{On the Efficiency of Quantum Algorithms for Hamiltonian Simulation}
\author{Anargyros Papageorgiou}
\email{ap@cs.columbia.edu}
\affiliation{Department of Computer
Science, Columbia University, New York, USA, 10027}
\author{Chi Zhang}
\email{czhang@cs.columbia.edu} \affiliation{Department of Computer
Science, Columbia University, New York, USA, 10027}
\date{\today}

\begin{abstract}

We study the efficiency of algorithms simulating a system
evolving with Hamiltonian $H=\sum_{j=1}^m H_j$. We consider high
order splitting methods that play a key role in quantum Hamiltonian
simulation. We obtain upper bounds on the number of exponentials required
to approximate $e^{-iHt}$ with error $\e$.
Moreover, we derive the order of the
splitting method that optimizes the cost of the resulting algorithm.
We show significant speedups relative to
previously known results.

\end{abstract}

\pacs{03.67.Ac, 03.67.Lx}

\maketitle

\section{Introduction}
While the computational cost of simulating many particle quantum
systems using classical computers grows exponentially with the
number of particles, quantum computers have the
potential to carry out the simulation efficiently
\cite{sim_1,sim_2,sim_3,Berry}. This property, pointed out by
Feynman, is one of the fundamental ideas of the field of quantum
computation. The simulation problem is also related to quantum walks
and adiabatic optimization
\cite{adia_1,adia_2,walk_1,walk_2,walk_3,walk_4}.

A variety of quantum algorithms have been proposed to predict and
simulate the behavior of different physical and chemical systems. Of
particular interest are {\it splitting methods} that simulate the
unitary evolution $e^{-iHt}$, where $H$ is the system Hamiltonian,
by a product of unitary operators of the form $e^{-iA_l t_l}$, for
some $t_l$, $l=1,\dots,N$, where $A_l\in\{H_1,\dots,H_m\}$,
$H=\sum_{j=1}^m H_j$ and assuming the Hamiltonians $H_j$ do not
commute. It is further assumed that the $H_j$ can be implemented
efficiently. Throughout this paper we assume that the $H_j$ are
either Hermitian matrices or bounded Hermitian operators so that
$\norm{H_j}< \infty$ for $j=1,\dots,m$, where $\norm{\cdot}$ is an
induced norm \footnote{An induced norm implies that
$\norm{H_jH_k}\leq \norm{H_j}\norm{H_k}$.}.

As Nielsen and Chuang \cite[p. 207]{NC} point out, the heart
of quantum simulation is
in the Lie-Trotter formula
$$\lim_{n\to\infty} \left( e^{-iH_1 t/n} e^{-iH_2 t/n}\right)^n
= e^{-i(H_1+H_2)t}.$$
From this one obtains the second order approximation
$$e^{-i(H_1+H_2)\dt} = e^{-iH_1 \dt} e^{-iH_2 \dt} + O(|\dt|^2).$$
A third order approximation is given by the Strang splitting
$$e^{-i(H_1+H_2)\dt} = e^{-iH_1 \dt/2} e^{-iH_2 \dt} e^{-iH_1 \dt/2}
+ O(|\dt|^3).$$
Suzuki \cite{Suzuki_1,Suzuki_2} uses recursive modifications of this
approximation to derive methods of order $2k+1$, for $k=1,2,\dots$.

A recent paper \cite{Berry} shows that Suzuki's high order splitting methods
can be used to derive bounds for the number $N$
of exponentials, assuming the $H_j$ are local Hamiltonians.
These bounds are expressed in terms of the evolution time  $t$, the
norm $\norm{H}$ of the Hamiltonian $H$, the order of the splitting
method $2k+1$, the number of Hamiltonians $m$, and the error $\e$ in
the approximation of $e^{-iHt}$. In this paper we will show how
these bounds can be significantly improved.

Consider the Hamiltonians indexed with respect to the magnitude of
their norms $\norm{H_1}\ge \norm{H_2}\ge \cdots \ge\norm{H_m}$. Then
the number of necessary exponentials $N$ generally depends on $H_1$,
but it must also depend explicitly on $H_2$ since only one
exponential should suffice for the simulation if $\norm{H_2}\to 0$.
This observation is particularly important for the simulation of
systems in physics and chemistry. To see this, suppose $m=2$ and
that $H_1$ is a discretization of the negative Laplacian $-\Delta$,
while $H_2$ is a discretization of a uniformly bounded potential.
Then $e^{-iH_1t_1}$ and $e^{-iH_2t_2}$ can be implemented
efficiently for any $t_1,t_2$, and $\norm{H_2}\ll \norm{H_1}$. We
will see that, not only in this case but in general, the number of
exponentials is proportional to both $\norm{H_1}$ and $\norm{H_2}$,
i.e., the Hamiltonian of the second largest norm plays an important
role.

Let $\e$ be sufficiently small. The previously known bound for the
number of exponentials, according to \cite{Berry}, is
\begin{equation}
N \le N_{\rm prev}:=
m5^{2k}(m\norm{H}t)^{1+\frac{1}{2k}}{\e}^{-1/(2k)}.
\end{equation}
This bound does not properly reflect the dependence on $H_2$.

Performing a more detailed analysis of the approximation error by
high order splitting formulas, it is possible to improve the bounds
for $N$ substantially. The new estimates lead to {\em optimal}
splitting methods of significantly lower order which greatly reduces
the cost of the algorithms.

We now summarize our results. Recall that the $H_j$ can be
implemented efficiently  but do not commute and $\norm{H_1}\ge
\norm{H_2}\ge\cdots \norm{H_m}$. We show the following:
\begin{enumerate}

\item A new bound for the number of exponentials $N$, given by
$$N\le N_{\rm new}:= 2 (2m-1)\; 5^{k-1}
 \norm{H_1}t \left( \frac { 4emt \norm{H_2} }{\e }
\right)^{1/(2k)} \frac {4me}3 \left( \frac 53 \right)^{k-1}.$$

\item A speedup factor of
\begin{equation*}
\frac{N_{\rm new}}{N_{\rm prev}} \leq \frac {2}{3^k}
\left(\frac{4e\norm{H_2}}{\norm{H_1}}\right)^{1/2k}.
\end{equation*}

\item We show that the {\em optimal} $k^*_{\rm new}$
that minimizes $N_{\rm new}$ is
$$k^*_{\rm new} := {\rm round}\left( 
\sqrt{\frac 12  \log_{25/3} \frac {4emt\norm{H_2}}{\e}} \, \right). $$

On the other hand, from \cite{Berry} the bound for $N_{\rm prev}$ is
minimized for
\begin{equation*}
k^*_{\rm prev} =  {\rm round}\left(
\frac{1}{2}\sqrt{\log_5\frac{m\norm{H}t}{\e}+1}\,\right).
\end{equation*}

\item For $k_{\rm new}^*$ the value of $N_{\rm new}$ satisfies
$$
N^*_{\rm new} \leq \frac {8}{3}\, (2m-1)\, met \, \norm{H_1} \; e^{2
\sqrt{\tfrac 12 \ln \tfrac {25}3 \ln \tfrac {4emt\norm{H_2}}{\e}}}.
$$ For $k_{\rm prev}^*$ the value of $N_{\rm prev}$ is
\begin{equation*}
N^*_{\rm prev} = 2m^2\norm{H}t\cdot e^{2\sqrt{\ln5
\ln(m\norm{H}t/\epsilon)}}.
\end{equation*}
Hence
$$\frac {N^*_{\rm new}}{N^*_{\rm prev}} \le
 \frac {8\,e}{3}\,
e^{2 \left( \sqrt{\tfrac 12 \ln \tfrac {25}3 \ln \tfrac
{4emt\norm{H_2}}{\e}} -  \sqrt{ \ln 5 \ln
\tfrac{m\norm{H_1}t}{\e}}\, \right) }.
$$
\end{enumerate}

\section{Splitting methods for simulating the sum of two Hamiltonians}

We begin this section by discussing the simulation of
$$e^{-i(H_1+H_2)t},$$
where $H_1,H_2$ are given Hamiltonians. Restricting the analysis to
$m=2$ will allow us to illustrate the main idea in our approach
while avoiding the rather complicated notation needed in the general
case, for $m\ge 2$. The simulation of the Schr\"odinger equation of
a $p$-particle system, where $H_1$ is obtained from the Laplacian
operator and $H_2$ is the potential, requires one to consider an
evolution operator that has the form above; see \cite{sim_3}.

In the next section we deal with the more general simulation problem
involving a sum of $m$ Hamiltonians, $H_1,\dots, H_m$, as Berry et
al. \cite{Berry} did, and we will show how to improve their
complexity results.

Suzuki proposed methods for decomposing exponential operators in a
number of papers \cite{Suzuki_1, Suzuki_2}.
For sufficiently small $\dt$,
starting from the
formula
\begin{equation*}
S_2(H_1,H_2,\dt) = e^{-iH_1\dt/2}e^{-iH_2\dt}e^{-iH_1\dt/2},
\end{equation*}
and proceeding recursively, Suzuki defines
\begin{equation*}
S_{2k}(H_1,H_2,\dt) =
[S_{2k-2}(H_1,H_2,p_k\dt)]^2S_{2k-2}(H_1,H_2,(1-4p_k)\dt)[S_{2k-2}(H_1,H_2,p_k\dt)]^2,
\end{equation*}
for $k=2,3,\cdots$, where $p_k = (4-4^{1/(2k-1)})^{-1}$, and then
proves that
\begin{equation}\label{suzuki}
\bigl\| e^{-i(H_1+H_2) \dt}-S_{2k}(H_1,H_2,\dt) \bigr\| =
O(|\dt|^{2k+1}).
\end{equation}
Suzuki was particularly interested in the order of his method, which
is $2k+1$, and did not address the size of the implied asymptotic
factors in the big-$O$ notation. However, these factors depend on the norms
of $H_1$ and $H_2$ and can be very large, when $H_1$ and $H_2$ do
not commute. For instance, when $H_1$ is obtained from the
discretization of the Laplacian operator with mesh size $h$,
$\norm{H_1}$ grows as $h^{-2}$. Since $h=\e$, we get $\norm{H_1} =
O(\frac{1}{\e^2})$. Hence, for fine discretizations $\norm{H_1}$ is
huge, and severely affects the error bound above.

Suppose $\norm{H_1}\ge \norm{H_2}$. Since
\begin{equation*}
e^{-i(H_1+H_2)t} = e^{-i(\mathcal{H}_1+\mathcal{H}_2)\norm{H_1}t},
\end{equation*}
where $\mathcal{H}_j = H_j / \norm{H_1}$, for $j=1,2$, we can
consider the simulation problem for $\mathcal{H}_1+\mathcal{H}_2$
with an evolution time $\tau =\norm{H_1}t$.

Unwinding the recurrence in Suzuki's construction yields
\begin{equation}
\label {eq:largeformula}
S_{2k}(\mathcal{H}_1,\mathcal{H}_2,\dt) =
\prod_{\ell=1}^{K} S_2(\mathcal{H}_1,\mathcal{H}_2,z_{\ell}\dt)
= \prod_{\ell=1}^{K}
\left[ e^{-i\mathcal{H}_1z_\ell \dt/2}e^{-i\mathcal{H}_2z_\ell \dt}e^{-i\mathcal{H}_1 z_\ell \dt/2} \right],
\end{equation}
where $K=5^{k-1}$ and each $z_\ell$ is defined
according to the recursive scheme, $\ell=1,\dots,K$.
In particular, $z_1 = z_K = \prod_{r=2}^k p_r$, and for the
intermediate values of $\ell$ the $z_\ell$ is a product of $k-1$ factors
and has the form $z_\ell = \prod_{r\in I_0} p_r \prod_{r\in I_1} (1-4p_r)$,
where the products are over the index sets $I_0, I_1$ defined
by traversing the corresponding to $\ell$ path of the
recursion tree.

Let $q_r = \max \{ p_r, 4p_r - 1 \}$, $r\ge 2$. Then
$\{ q_r \}$ is a decreasing sequence of positive numbers and from
\cite[p. 18]{wiebe} we have that
\begin{equation*}
\frac 3{3^{k}} \le \prod_{r=2}^k q_r \le \frac {4k}{3^k}.
\end{equation*}
Thus
\begin{equation} \label{eq:qbound0}
|z_\ell| \le \frac {4k}{3^k}\quad {\rm\ for\ all\ } \ell=1,\dots,K.
\end{equation}

Equation (\ref{eq:largeformula}) can be expressed in the more compact form
which we use to simplify the notation. Namely,
\begin{equation}\label{eq:compactformula}
S_{2k}(\mathcal{H}_1,\mathcal{H}_2,\dt) =
e^{-i\mathcal{H}_1s_0\dt}e^{-i\mathcal{H}_2z_1\dt}e^{-i\mathcal{H}_1s_1\dt}\cdots
e^{-i\mathcal{H}_2z_K\dt}e^{-i\mathcal{H}_1s_K\dt},
\end{equation}
where $s_0=z_1/2$, $s_j=(z_j+z_{j+1})/2$, $j=1,\dots,K-1$, and
$s_{K}=z_{K}/2$.
Observe that
$\sum_{j=0}^K s_j = 1$, $\sum_{j=1}^K z_j=1$.

We need to bound
$\sigma_k = \sum_{j=0}^K |s_j|+\sum_{j=1}^K |z_j|$
from above. From (\ref{eq:qbound0}) we have
$$
\sum_{j=1}^K |z_j| \le \frac {4k5^{k-1}}{3^k},
$$
and also
$$
\sum_{j=0}^K |s_j| \le \frac {4k5^{k-1}}{3^k}.
$$
Thus
\begin{equation}\label{eq:qbound}
\sigma_k \le \frac 83 k \left(\frac 53\right)^{k-1}=:c_k \quad {\rm \ for\ } k\ge 1.
\end{equation}
(The above trivially holds for $k=1$.)

Expanding each exponential in (\ref{eq:compactformula}) we obtain
\begin{equation}\label{exp}
\begin{split}
S_{2k}(\mathcal{H}_1,\mathcal{H}_2,\dt) =&(I+\mathcal{H}_1s_0(-i\Delta t)+\frac{1}{2}\mathcal{H}_1^2s_0^2(-i\Delta t)^2+\cdots+\frac{1}{k!}\mathcal{H}_1^ks_0^k(-i\Delta t)^k+\cdots)\\
&(I+\mathcal{H}_2z_1(-i\Delta t)+\frac{1}{2}\mathcal{H}_2^2z_1^2(-i\Delta t)^2+\cdots+\frac{1}{k!}\mathcal{H}_2^kz_1^k(-i\Delta t)^k+\cdots)\\
&(I+\mathcal{H}_1s_1(-i\Delta t)+\frac{1}{2}\mathcal{H}_1^2s_1^2(-i\Delta t)^2+\cdots+\frac{1}{k!}\mathcal{H}_1^ks_1^k(-i\Delta t)^k+\cdots)\\
&\cdots\\
&(I+\mathcal{H}_2z_K(-i\Delta t)+\frac{1}{2}\mathcal{H}_2^2z_K^2(-i\Delta t)^2+\cdots+\frac{1}{k!}\mathcal{H}_2^kz_K^k(-i\Delta t)^k+\cdots)\\
&(I+\mathcal{H}_1s_K(-i\Delta
t)+\frac{1}{2}\mathcal{H}_1^2s_K^2(-i\Delta
t)^2+\cdots+\frac{1}{k!}\mathcal{H}_1^ks_K^k(-i\Delta t)^k+\cdots).
\end{split}
\end{equation}
After carrying out the multiplications we see that $S_{2k}$ is a
sum of terms that has the form
\begin{equation}\label{exp2}
\frac{s_0^{\alpha_0}s_1^{\alpha_1}\cdots
s_K^{\alpha_K}z_1^{\beta_1}\cdots
z_K^{\beta_K}}{\alpha_0!\alpha_1!\cdots\alpha_K!\beta_1!\cdots\beta_K!}
\mathcal{H}_1^{\alpha_0}\mathcal{H}_2^{\beta_1}\mathcal{H}_1^{\alpha_1}\cdots
\mathcal{H}_2^{\beta_K}\mathcal{H}_1^{\alpha_K} (-i\Delta
t)^{\sum_{i=0}^K\alpha_i+\sum_{j=1}^K\beta_j},
\end{equation}
where the $\alpha_0,\alpha_1,\cdots,\alpha_K$ and the
$\beta_1,\cdots,\beta_K$ are obtained by multiplying the
denominators in the expansion of the exponentials.

The terms that do not contain $\mathcal{H}_2$ are those for which
$\beta_1=\beta_2=\cdots=\beta_K=0$, and their sum is
\begin{equation}
\begin{split}
&\sum_{\alpha_0,\alpha_1,\cdots,\alpha_K}
\frac{s_0^{\alpha_0}s_1^{\alpha_1}\cdots
s_K^{\alpha_K}}{\alpha_0!\alpha_1!\cdots\alpha_K!}\mathcal{H}_1^{\sum_{j=0}^K\alpha_j}(-i\Delta t)^{\sum_{j=0}^K\alpha_j}\\
&=\sum_{\alpha_0}
\frac{1}{\alpha_0!}\mathcal{H}_1^{\alpha_0}(-is_0\Delta
t)^{\alpha_0}\cdot \sum_{\alpha_1}
\frac{1}{\alpha_1!}\mathcal{H}_1^{\alpha_1}(-is_1\Delta
t)^{\alpha_1}\cdot \cdots \cdot\sum_{\alpha_K}
\frac{1}{\alpha_K!}\mathcal{H}_1^{\alpha_K}(-is_K\Delta t)^{\alpha_K}\\
&= \prod_{j=0}^K e^{-i\mathcal{H}_1s_j\Delta t} =
\exp(-i\sum_{j=0}^K \mathcal{H}_1s_j\Delta t) =
\exp(-i\mathcal{H}_1\Delta t).
\end{split}
\end{equation}
On the other hand, consider
\begin{equation}
e^{-i(\mathcal{H}_1+\mathcal{H}_2)\dt} = I +
(-i(\mathcal{H}_1+\mathcal{H}_2)\dt) + \cdots + \frac
1{k!}(-i(\mathcal{H}_1+\mathcal{H}_2)\dt)^k +\cdots.
\end{equation}
The terms that do not contain $\mathcal{H}_2$ sum to
\begin{equation}
\sum_{k=0}^{\infty} \frac{1}{k!}\mathcal{H}_1^k (-i\dt)^k =
e^{-i\mathcal{H}_1\Delta t}.
\end{equation}

Let us now consider the bound in (\ref{suzuki}). Clearly the terms
that do not contain $\mathcal{H}_2$ cancel out. Therefore, the error
is proportional to $\norm{\mathcal{H}_2}|\dt|^{2k+1}$, i.e. it
depends on the ratio $\norm{H_2}/\norm{H_1}$ of the norms of the
original Hamiltonians. This fact will be used to
improve the error and complexity results of Berry et al. \cite{Berry}

\begin{Lemma} For $k\in\naturals$, $c_k|\dt|\le k+1$ (see, Eq. \ref{eq:qbound})
and $\norm{\mathcal{H}_2}\le \norm{\mathcal{H}_1}=1$ we have
\begin{equation}
\norm{\exp(-i(\mathcal{H}_1+\mathcal{H}_2)\dt) -
S_{2k}(\mathcal{H}_1,\mathcal{H}_2,\dt)} \leq
 \frac{4\norm{\mathcal{H}_2}}{(2k+1)!}(c_k|\dt|)^{2k+1}.
\end{equation}
\end{Lemma}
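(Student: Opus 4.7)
The plan exploits two separate cancellation mechanisms that, combined, restrict the error to a controllable tail. First, Suzuki's estimate (\ref{suzuki}) asserts $\|e^{-i(\mathcal{H}_1+\mathcal{H}_2)\dt} - S_{2k}\| = O(|\dt|^{2k+1})$, which at the level of formal $\dt$-Taylor series means that the coefficients of $\dt^0,\dt^1,\dots,\dt^{2k}$ of the two operators coincide. Second, the paragraph just above the lemma shows that for every $n$, the $\mathcal{H}_2$-free monomials contributing to the $(-i\dt)^n$ coefficient of $e^{-i(\mathcal{H}_1+\mathcal{H}_2)\dt}$ and of $S_{2k}$ are identical (both reassemble into $e^{-i\mathcal{H}_1\dt}$). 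I would write each coefficient as a sum of an $\mathcal{H}_2$-free piece plus an $\mathcal{H}_2$-containing piece, $E_n = E_n^{(0)} + E_n^{(+)}$ and $S_n = S_n^{(0)} + S_n^{(+)}$. The two cancellations together leave
$$e^{-i(\mathcal{H}_1+\mathcal{H}_2)\dt} - S_{2k}(\mathcal{H}_1,\mathcal{H}_2,\dt) = \sum_{n \ge 2k+1} \bigl(E_n^{(+)} - S_n^{(+)}\bigr)(-i\dt)^n.$$

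Next I would bound $\|E_n^{(+)}\|$ and $\|S_n^{(+)}\|$ by absolute-value bookkeeping. Expanding $(\mathcal{H}_1+\mathcal{H}_2)^n$ as a sum over binary strings in $\{1,2\}^n$ produces at most $2^n$ monomials containing at least one $\mathcal{H}_2$; each such monomial has operator-norm at most $\norm{\mathcal{H}_2}$ because $\norm{\mathcal{H}_1}=1$, $\norm{\mathcal{H}_2}\le 1$, and the norm is submultiplicative, giving $\|E_n^{(+)}\| \le 2^n\norm{\mathcal{H}_2}/n!$. For $S_n^{(+)}$ I would use the expansion (\ref{exp2}): each monomial with $\sum \beta_j \ge 1$ again has operator-norm $\le \norm{\mathcal{H}_2}$, while the sum of the moduli of its scalar prefactors, majorized by dropping the constraint $\sum\beta_j\ge 1$, equals $(\sum_j|s_j|+\sum_j|z_j|)^n/n!=\sigma_k^n/n! \le c_k^n/n!$ by (\ref{eq:qbound}). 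Since $c_k \ge 2$ for $k \ge 1$, the two contributions combine to
$$\bigl\|E_n^{(+)} - S_n^{(+)}\bigr\|\, |\dt|^n \le \frac{2\norm{\mathcal{H}_2}(c_k|\dt|)^n}{n!}.$$

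Finally I would sum this tail over $n \ge 2k+1$ and use the hypothesis $c_k|\dt| \le k+1$ to collapse it. Writing the ratio of successive summands as $c_k|\dt|/(n+1) \le (k+1)/(2k+2) = 1/2$ for every $n \ge 2k+1$, the series is dominated by a geometric series with ratio $1/2$, so its total is at most twice the leading term, yielding $\sum_{n \ge 2k+1} (c_k|\dt|)^n/n! \le 2(c_k|\dt|)^{2k+1}/(2k+1)!$, and hence the stated bound $4\norm{\mathcal{H}_2}(c_k|\dt|)^{2k+1}/(2k+1)!$. The main obstacle in this plan is the very first step: one must justify passing from Suzuki's asymptotic statement (\ref{suzuki}) to a genuine coefficient-by-coefficient identity of the two $\dt$-power series through degree $2k$, and combine it cleanly with the per-degree $\mathcal{H}_2$-free matching so that only the restricted tail survives. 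Once that reduction is in place, the norm bookkeeping via submultiplicativity and the multinomial theorem is routine.
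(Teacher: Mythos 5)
Your proposal is correct and follows essentially the same route as the paper's proof: the same decomposition into per-degree remainders $R_l-T_l$ with the $\mathcal{H}_1$-only terms and the orders $\le 2k$ cancelled, the same $2^l/l!$ and $c_k^l/l!$ norm bounds (the paper uses $2\le c_k$ implicitly where you state it), and the same tail estimate with ratio $c_k|\dt|/(n+1)\le 1/2$ (which the paper phrases as a Poisson-tail bound). The ``obstacle'' you flag --- upgrading Suzuki's $O(|\dt|^{2k+1})$ statement to coefficient-wise cancellation through degree $2k$ --- is handled by the paper in exactly the same way, by appeal to Eq.~(\ref{suzuki}); since both sides are entire functions of $\dt$, this is immediate.
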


\begin{proof}
For notational convenience we use $S_{2k}(\dt)$ to denote
$S_{2k}(\mathcal{H}_1,\mathcal{H}_2,\dt)$. Consider
\begin{equation}
\exp(-i(\mathcal{H}_1+\mathcal{H}_2)\dt)-S_{2k}(\dt) =
\sum_{l=2k+1}^{\infty} \bigl[ R_{l}(\dt)-T_{l}(\dt)\bigr],
\end{equation}
where $R_{l}(\dt)$ is the sum of all terms in
$\exp(-i(\mathcal{H}_1+\mathcal{H}_2)\dt)$ corresponding to
$\dt^{l}$ and $T_{l}(\dt)$ is the sum of all terms in $S_{2k}(\dt)$
corresponding to $\dt^l$. Moreover, we know that the terms with only
$\mathcal{H}_1$ cancel out. Hence, we can ignore the terms in
$T_{l}(\dt)$ and $R_{l}(\dt)$ that contain only $\mathcal{H}_1$ (and
not $\mathcal{H}_2$) as a factor. It follows that
\begin{equation}
R_{l}(\dt) = \frac{1}{l!}
(\mathcal{H}_1+\mathcal{H}_2)^{l}(-i\dt)^{l}
-\frac{1}{l!}\mathcal{H}_1^{l}(-i\dt)^{l}.
\end{equation}
Then
\begin{equation}\label{t}
\norm{R_{l}(\dt)} \leq \frac{1}{l!} 2^{l} \norm{\mathcal{H}_2}
|\dt|^{l},
\end{equation}
since there are $2^{l}-1$ terms, and they are bounded by
$\frac{1}{l!}\norm{\mathcal{H}_2}|\dt|^l$.

Now consider the terms in $T_{l}(\dt)$. From (\ref{exp},\ref{exp2})
\begin{equation}
T_{l}(\dt) = \sum_{\sum_{i=0}^K\alpha_i+\sum_{i=1}^K \beta_i = l,\;
\sum_{i=1}^K \beta_i\ne 0}\frac{s_0^{\alpha_0}s_1^{\alpha_1}\cdots
s_K^{\alpha_K}z_1^{\beta_1}\cdots
z_K^{\beta_K}}{\alpha_0!\alpha_1!\cdots\alpha_K!\beta_1!\cdots\beta_K!}
\mathcal{H}_1^{\alpha_0}\mathcal{H}_2^{\beta_1}\mathcal{H}_1^{\alpha_1}\cdots
\mathcal{H}_2^{\beta_K}\mathcal{H}_1^{\alpha_K} (-i\dt)^{l},
\end{equation}
where the condition $\sum_{i=1}^K\beta_i \neq 0$ hold because there
are no terms containing $\mathcal{H}_1$ alone. Since the norm of
$\mathcal{H}_1^{\alpha_0}\mathcal{H}_2^{\beta_1}\mathcal{H}_1^{\alpha_1}\cdots
\mathcal{H}_2^{\beta_K}\mathcal{H}_1^{\alpha_K}$ is at most
$\norm{\mathcal{H}_2}$, we have
\begin{equation}
\norm{T_{l}(\dt)} \leq \sum_{\sum_{i=0}^K\alpha_i+\sum_{i=1}^K
\beta_i = l}\frac{|s_0^{\alpha_0}s_1^{\alpha_1}\cdots
s_K^{\alpha_K}z_1^{\beta_1}\cdots
z_K^{\beta_K}|}{\alpha_0!\alpha_1!\cdots\alpha_K!\beta_1!\cdots\beta_K!}
\norm{\mathcal{H}_2}|\dt|^l.
\end{equation}
Note that we relaxed the condition $\sum_{i=1}^K \beta_i \neq 0$
since it does not affect the inequality.

To calculate the sum
$\sum\frac{|s_0^{\alpha_0}s_1^{\alpha_1}\cdots
s_K^{\alpha_K}z_1^{\beta_1}\cdots
z_K^{\beta_K}|}{\alpha_0!\alpha_1!\cdots\alpha_K!\beta_1!\cdots\beta_K!}$,
where $\sum_{i=0}^K\alpha_i+\sum_{i=1}^K \beta_i = l$, we first
consider the following equation
\begin{equation}
\begin{split}
&\exp(|s_0\dt|)\exp(|z_1\dt|)\exp(|s_1\dt|)\cdots\exp(|z_K\dt|)\exp(|s_K\dt|)\\
&= \left(\sum_{\alpha_0=0}^{\infty}
\frac{1}{\alpha_0!}|s_0\dt|^{\alpha_0}\right)
\cdot\left(\sum_{\beta_1=0}^{\infty}
\frac{1}{\beta_1!}|z_1\dt|^{\beta_0}\right)
\cdot\left(\sum_{\alpha_1=0}^{\infty}
\frac{1}{\alpha_1!}|s_1\dt|^{\alpha_0}\right) \cdots  \\
&\quad \quad \quad\quad\cdots \cdot \left(\sum_{\beta_K=0}^{\infty}
\frac{1}{\beta_K!}|z_K\dt|^{\beta_K}\right)
\cdot\left(\sum_{\alpha_K=0}^{\infty}
\frac{1}{\alpha_K!}|s_K\dt|^{\alpha_K}\right) \\
&= \sum_{p=0}^{\infty} \sum_{\sum\alpha_j+\sum\beta_j
=p}\frac{|s_0^{\alpha_0}s_1^{\alpha_1}\cdots
s_K^{\alpha_K}z_1^{\beta_1}\cdots
z_K^{\beta_K}|}{\alpha_0!\alpha_1!\cdots\alpha_K!\beta_1!\cdots\beta_K!}
|\dt|^{p}.
\end{split}
\end{equation}
Hence $\sum_{\sum\alpha_j+\sum\beta_j
=l}\frac{|s_0^{\alpha_0}s_1^{\alpha_1}\cdots
s_K^{\alpha_K}z_1^{\beta_1}\cdots
z_K^{\beta_K}|}{\alpha_0!\alpha_1!\cdots\alpha_K!\beta_1!\cdots\beta_K!}$
is the coefficient of $|\dt|^{l}$ in the equation above. Similarly,
\begin{equation}
\begin{split}
&\exp(|s_0\dt|)\exp(|z_1\dt|)\exp(|s_1\dt|)\cdots\exp(|z_K\dt|)\exp(|s_K\dt|)\\
&=\exp((\sum_{i=0}^K |s_i| + \sum_{i=1}^K |z_i|)|\dt|) =
\exp(\sigma_k|\dt|)\\
&= \sum_{p=0}^{\infty} \frac{1}{p!} \sigma_k^p |\dt|^{p},
\end{split}
\end{equation}
Recall that the bound for $\sigma_k$ given in Eq. (\ref{eq:qbound}).
Thus the coefficient of $|\dt|^{l}$ is bounded from above by
$\frac{1}{l!} c_k^l$. Therefore,
we have
\begin{equation}\label{re}
\norm{T_{l}(\dt)} \leq \frac{c_k^l}{l!}\norm{\mathcal{H}_2}|\dt|^{l}.
\end{equation}

We combine Eq. (\ref{t}), (\ref{re}), to obtain
\begin{equation}
\begin{split}
\norm{\exp((\mathcal{H}_1+\mathcal{H}_2)\dt)-S_{2k}(\dt)} &\leq
\sum_{l=2k+1}^{\infty}\norm{R_{l}(\dt)-T_{l}(\dt)} \\ &\leq
\sum_{l=2k+1}^{\infty}\norm{R_{l}(\dt)}+\norm{T_{l}(\dt)}\\
&\leq 2 \sum_{l=2k+1}^{\infty}
\frac{c_k^l}{l!}\norm{\mathcal{H}_2}|\dt|^{l}\\
&\leq \frac{2}{(2k+1)!}\norm{\mathcal{H}_2}|c_k\dt|^{2k+1}
\left( 1 -\frac {c_k|\dt|}{2k+2}\right)^{-1} \\
&\le
\frac{4}{(2k+1)!}\norm{\mathcal{H}_2} |c_k\dt|^{2k+1},
\end{split}
\end{equation}
where the last two inequalities follow from
the assumption $c_k |\dt|\le k+1$.
and an estimate of the tail of
the Poisson distribution; see, e.g., \cite[Thm 1]{possion}.
\end{proof}

\begin{thm} Let $1\ge \e>0$
be such that $8et\norm{H_2}\ge \e$.
The number $N$ of exponentials for the simulation of
$e^{-i(H_1+H_2)t}$ with accuracy $\e$ is bounded as follows
$$
N\le 3\; 5^{k-1}
\left\lceil
\norm{H_1}t \left(\frac {8et\norm{H_2}}{ \e } \right)^{1/(2k)}
\frac {8e}3 \left(\frac 53\right)^{k-1}\right\rceil,
$$
for any $k\in \naturals$, where $\norm{H_2}\le\norm{H_1}$.
\end{thm}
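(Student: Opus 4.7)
The plan is to reduce to the rescaled problem with $\|\mathcal{H}_1\|=1$ and evolution time $\tau=\|H_1\|t$, split $[0,\tau]$ into $n$ equal subintervals of length $\dt=\tau/n$, approximate each piece by $S_{2k}(\mathcal{H}_1,\mathcal{H}_2,\dt)$, and then choose $n$ just large enough to drive the total error below $\e$. Because each factor in $\prod_{j=1}^n S_{2k}(\mathcal{H}_1,\mathcal{H}_2,\dt)$ is unitary, a standard telescoping argument bounds the total error by $n$ times the per-step error, and the previous Lemma bounds the latter by $4\|\mathcal{H}_2\|(c_k|\dt|)^{2k+1}/(2k+1)!$, provided the side condition $c_k|\dt|\le k+1$ holds.

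First I would set $n\cdot 4\|\mathcal{H}_2\|(c_k\tau/n)^{2k+1}/(2k+1)!\le\e$ and solve for $n$, obtaining
$$n\ge \|H_1\|t\cdot c_k\cdot\left(\frac{4c_k\|H_2\|t}{(2k+1)!\,\e}\right)^{1/(2k)}.$$
The task is then to show this right-hand side is dominated by the quantity inside the ceiling in the theorem, namely $\|H_1\|t(8et\|H_2\|/\e)^{1/(2k)}(8e/3)(5/3)^{k-1}$. Recalling $c_k=(8k/3)(5/3)^{k-1}$, the target coefficient is exactly $c_k\cdot e/k$, so after raising both sides to the $2k$-th power the inequality reduces to an elementary check of the form $c_k\le 2e(2k+1)!(e/k)^{2k}$, which follows from Stirling's lower bound $n!\ge (n/e)^n$ together with $(2+1/k)^{2k}\ge 4^k$.

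Next I would verify the Lemma's side condition \emph{a posteriori}: the hypothesis $8et\|H_2\|\ge\e$ guarantees $(8et\|H_2\|/\e)^{1/(2k)}\ge 1$, so the chosen $n$ is at least $\|H_1\|t(8e/3)(5/3)^{k-1}$; combining this with $c_k=(8k/3)(5/3)^{k-1}$ yields $c_k|\dt|\le k/e\le k+1$ as required. Finally, the exponential count is immediate: each $S_{2k}$ uses $2K+1$ exponentials with $K=5^{k-1}$, and since $2K+1\le 3K$ for all $k\ge 1$, the total number is at most $3\cdot 5^{k-1}\,n$, which matches the stated bound after taking the ceiling.

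The main obstacle is the algebraic bookkeeping in the second step: juggling the exponents of $c_k$, $(2k+1)!$, $\|H_2\|$, $t$, and $\e$ so that the factorial is absorbed cleanly into the claimed $(8e/3)(5/3)^{k-1}$ factor via Stirling. Everything else (telescoping across $n$ steps, verifying the side condition, and converting the per-step count into the total) is routine once the correct form of $n$ has been identified.
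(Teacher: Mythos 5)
Your proposal is correct and follows essentially the same route as the paper: apply Lemma 1 on each of $n$ subintervals, telescope using unitarity, solve for $n$, absorb the factorial via Stirling, and verify the side condition $c_k|\dt|\le k+1$ a posteriori from $8et\norm{H_2}\ge \e$. The only cosmetic differences are that by working with $n=\lceil\cdot\rceil$ equal subintervals of length $\tau/n$ you fold in the case $M\norm{H_1}t<1$ that the paper treats separately, and your Stirling bookkeeping reduces to the elementary inequality $c_k\le 2e(2k+1)!\,(e/k)^{2k}$, which does check out.
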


\begin{proof} Let $M= |{\dt}|^{-1}$. Then using Lemma 1 and
$\mathcal{H}_j = H_j / \norm{H_1}$, $j=1,2$, we obtain
\begin{equation*}
\begin{split}
\big\| e^{-i(H_1+H_2)t} -
S_{2k}^{M\norm{H_1}t}(\mathcal{H}_1,\mathcal{H}_2, 1/M)\big\| &\le
M\norm{H_1} t \frac{4}{(2k+1)!}\norm{\mathcal{H}_2}
\left(\frac {c_k}{M}\right)^{2k+1} \\
&=  4t \norm{H_2} \,\frac{c_k^{2k+1}}{(2k+1)!} \frac 1{M^{2k}}.
\end{split}
\end{equation*}
Recall that $c_k$ is defined in (\ref{eq:qbound}) and is used in Lemma 1.
For accuracy $\e$ we obtain
$$M\ge \left(\frac {4t\norm{H_2} c_k^{2k+1}}{\e (2k+1)!} \right)^{1/(2k)}.$$
We use Stirling's formula \cite[p. 257]{Stegun} for the factorial function
$$
(2k+1)!=\sqrt{2\pi}(2k+1)^{(2k+1)+1/2}e^{-(2k+1)+\theta/(12(2k+1))},
\quad 0<\theta<1,$$
which yields
\begin{equation}\label{eq:factbound}
[(2k+1)!]^{-1/(2k)} \le e^{1+1/(2k)} /(2k+1).
\end{equation}
It is easy to check that
$$c_k^{1/(2k)} \le 2^{1+1/(2k)}.$$
Thus it suffices to
take
$$M \ge \left(\frac {8et\norm{H_2}}{\e } \right)^{1/(2k)}
\frac {2\,e\; c_k}{2k+1}.$$

So we define $M$ to be lower bound of the expression above, i.e.,
$$M := \left(\frac {8et\norm{H_2}}{\e } \right)^{1/(2k)}
\frac {2e\;c_k}{2k+1}.$$
It is easy to check that
$$\frac {2e}{2k+1} (k+1)\ge e,$$
which along with the condition $8et\norm{H_2}\ge \e$ yields $M (k+1) \ge c_k$.
This shows the assumptions of Lemma~1 are satisfied with this value of $M$.

From the recurrence relation the number of required
exponentials to implement
$S_{2k}$ in one subinterval is no more than $3\cdot5^{k-1}$.
We need to consider two cases
concerning $M\norm{H_1}t$.
If $M\norm{H_1}t\ge 1$,
then the number of subintervals is
$\lceil M\norm{H_1}t\rceil$, i.e.,
we partition the entire time
interval into an integer number of subintervals, each of length
at most $M^{-1}$.
The total number of required exponentials is bounded by
$3\cdot5^{k-1}\lceil M\norm{H_1}t\rceil$.
Substituting the values of $M$ and $c_k$ we obtain
the bound for $N$. In particular,
\begin{equation}\label{eq:N1}
N\le 3\cdot 5^{k-1}
\left\lceil \norm{H_1}t \left(\frac {8et\norm{H_2}}{ \e } \right)^{1/(2k)}
\frac {8e}3 \left(\frac 53\right)^{k-1}\right\rceil.
\end{equation}

If $M\norm{H_1}t< 1$, then Lemma 1 can be used with $\dt=\norm{H_1}t$,
since $\norm{H_1}t\le M^{-1}$ and we have already seen that $M$ is such
that the assumptions of Lemma 1 are satisfied. Thus
\begin{eqnarray*}
\big\| e^{-i(H_1+H_2)t} -
S_{2k}(\mathcal{H}_1,\mathcal{H}_2, \norm{H_1}t)\big\| &\le&
\frac{4}{(2k+1)!}\norm{\mathcal{H}_2}
\left(c_k \norm{H_1}t \right)^{2k+1} \\
=  4t \norm{H_2} \,\frac{c_k^{2k+1}}{(2k+1)!} (\norm{H_1}t)^{2k}
&\le&  4t \norm{H_2} \,\frac{c_k^{2k+1}}{(2k+1)!} (M)^{-2k} \le \e,
\end{eqnarray*}
where the last inequality holds by definition of $M$.
In this case the total number of exponentials is simply
\begin{equation}\label{eq:N2}
N\le  3\cdot 5^{k-1}.
\end{equation}

Combining (\ref{eq:N1}) and (\ref{eq:N2}) we obtain
$$
N\le 3\cdot 5^{k-1}
\left\lceil
\norm{H_1}t \left(\frac {8et\norm{H_2}}{ \e } \right)^{1/(2k)}
\frac {8e}3 \left(\frac 53\right)^{k-1}\right\rceil.
$$
This completes the proof.
\end{proof}

\begin{Rem}
Lemma 1 and Theorem 1 indicate that when $\norm{H_2}t\ll \e$ then the
number of exponentials $N$ can be further improved. In this case it
can be shown that high order splitting methods may lose their
advantage. We do not pursue this direction in this paper since we
assume that the $H_j$, $j=1,\dots, m$, are fixed and study $N$ as $\e\to 0$.
\end{Rem}

\section{Splitting methods for simulating the sum of many Hamiltonians}

In this section we deal with the simulation of
$$e^{-i\sum_{j=1}^m H_j t},$$
where $H_j$, $j=1,\dots,m$, are given non-commuting
Hamiltonians. The analysis and the conclusions are similar to those
of the previous section where $m=2$, but the proofs are much more
complicated and certainly tedious. This is the problem that Berry et
al. \cite{Berry} considered.

We use Suzuki's recursive construction once more \cite{Suzuki_2}. In
particular, for
$$
S_2(H_1,\dots,H_m,\dt) = \prod_{j=1}^m e^{-iH_j\dt/2}\prod_{j=m}^1 e^{-iH_j\dt/2},
$$
and
\begin{equation*}
S_{2k}(H_1,\dots,H_m,\dt) =
[S_{2k-2}(p_k\dt)]^2S_{2k-2}((1-4p_k)\dt)[S_{2k-2}(p_k\dt)]^2,
\quad k=2,3,\dots,
\end{equation*}
where for notational convenience we have used $S_{2k-2}(\dt)$ to
denote $S_{2k-2}(H_1,\cdots,H_m,\dt)$, and $p_k =
(4-4^{1/(2k-1)})^{-1}$, we have that
\begin{equation}\label{suzuki2}
\bigl\| e^{-i\sum_{j=1}^{m}H_j \dt}-S_{2k}(H_1,\dots,H_m,\dt) \bigr\| =
O(|\dt|^{2k+1}).
\end{equation}

Assuming again that $\norm{H_1}\ge \norm{H_2}\ge \cdots\ge \norm{H_m}$
we normalize the Hamiltonians by setting $\mathcal{H}_j = H_j / \norm{H_1}$,
$j=1,\dots,m$, and consider the equivalent simulation problem
$$e^{-i\sum_{j=1}^m \mathcal{H}_j \tau},$$
where $\tau= \norm{H_1} t$. Proceeding in a way similar to that for
$m=2$ of the previous section we derive the following lemma, whose
proof can be found in the Appendix.

\begin{Lemma} For $k\in\naturals$, $d_k|\dt | \le k+1$,
$d_k= m(4/3)k(5/3)^{k-1}$ and
$\norm{\mathcal{H}_m} \le \cdots\le
\norm{\mathcal{H}_2} \le \norm{\mathcal{H}_1}=1$
we have
\begin{equation}
\norm{\exp(-i\sum_{j=1}^m \mathcal{H}_j\dt)
- S_{2k}(\mathcal{H}_1,\dots,\mathcal{H}_m,\dt)} \leq
\frac{4\norm{\mathcal{H}_2}}{(2k+1)!}(d_k|\dt|)^{2k+1}.
\end{equation}
\end{Lemma}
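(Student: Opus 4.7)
The plan is to mimic the structure of Lemma 1, adapting each step to accommodate $m$ Hamiltonians instead of two. The first task is to unwind Suzuki's recursive definition of $S_{2k}(\mathcal{H}_1,\dots,\mathcal{H}_m,\dt)$ into a long product of single-Hamiltonian exponentials $\prod_\ell e^{-i\mathcal{H}_{j_\ell} w_\ell \dt}$. As in Section 2, the recursion produces $K = 5^{k-1}$ copies of $S_2$ scaled by time factors $z_\ell$ with $|z_\ell|\le 4k/3^k$ and $\sum_\ell z_\ell = 1$. Since each block $S_2(\ldots,z_\ell\dt)$ applies every $\mathcal{H}_j$ for a total time $z_\ell\dt$, the sum of absolute values of all the scalar time coefficients $w_\ell$ across the full product is bounded by $m\sum_\ell|z_\ell| \le m\cdot\frac{4k\cdot 5^{k-1}}{3^k} = d_k$.

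Next, expand each single exponential as a Taylor series and group terms by the total power $l$ of $\dt$: call these $R_l(\dt)$ on the exact side (from $\exp(-i\sum_j \mathcal{H}_j\dt)$) and $T_l(\dt)$ on the Suzuki side. Suzuki's order bound (\ref{suzuki2}) guarantees $R_l = T_l$ for $l\le 2k$, so it suffices to estimate $\sum_{l\ge 2k+1}\norm{R_l-T_l}$. The key cancellation is the one exploited in Lemma 1: the monomials involving \emph{only} $\mathcal{H}_1$ contribute $\frac{(-i\dt)^l}{l!}\mathcal{H}_1^l$ to both $R_l$ and $T_l$. For $R_l$ this is the diagonal term of $(\sum_j\mathcal{H}_j)^l$; for $T_l$, setting every exponential involving $\mathcal{H}_j$ ($j\neq 1$) to the identity leaves $\prod_{\ell:j_\ell=1}e^{-i\mathcal{H}_1 w_\ell\dt}$, which equals $e^{-i\mathcal{H}_1\dt}$ because the $\mathcal{H}_1$-coefficients in $S_{2k}$ sum to $1$. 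Thus only monomials containing at least one $\mathcal{H}_j$ with $j\ge 2$ survive in $R_l-T_l$, and each such monomial has operator norm at most $\norm{\mathcal{H}_2}$ (since $\norm{\mathcal{H}_j}\le\norm{\mathcal{H}_2}\le\norm{\mathcal{H}_1}=1$).

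To finish, bound the surviving terms. The contribution to $R_l$ amounts to at most $m^l-1$ monomials, each of norm at most $\norm{\mathcal{H}_2}|\dt|^l/l!$, and since $m\le d_k$ this is $\le d_k^l\norm{\mathcal{H}_2}|\dt|^l/l!$. The contribution to $T_l$ is controlled by the same generating-function trick used for Lemma 1: the product $\prod_\ell\exp(|w_\ell\dt|) = \exp((\sum_\ell|w_\ell|)|\dt|) \le \exp(d_k|\dt|)$, so extracting the coefficient of $|\dt|^l$ gives a total coefficient mass at most $d_k^l/l!$, and each surviving monomial carries at most $\norm{\mathcal{H}_2}$ in norm. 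Combining both pieces and summing over $l\ge 2k+1$ with the Poisson-tail estimate (valid under the assumption $d_k|\dt|\le k+1$) yields the stated bound $\frac{4\norm{\mathcal{H}_2}}{(2k+1)!}(d_k|\dt|)^{2k+1}$.

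The main obstacle will be the bookkeeping needed to unwind Suzuki's recurrence cleanly for arbitrary $m$ and to rigorously verify that the total coefficient mass is bounded by $d_k$, since the intermediate forms involve nested products with many indices. Once the unwinding is done, the cancellation of $\mathcal{H}_1$-only monomials, the generating-function coefficient bound, and the Poisson-tail estimate are direct generalizations of the $m=2$ argument.
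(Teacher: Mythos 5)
Your proposal is correct and follows essentially the same route as the paper's Appendix proof: unwind the recurrence to get the coefficient-mass bound $\sum_{j,n}|r_{j,n}|\le d_k$, cancel the $\mathcal{H}_1$-only monomials against $e^{-i\mathcal{H}_1\dt}$, bound $\norm{R_l}$ by counting $m^l-1$ monomials of norm at most $\norm{\mathcal{H}_2}|\dt|^l/l!$, bound $\norm{T_l}$ via the generating-function coefficient of $\exp(d_k|\dt|)$, and finish with the Poisson-tail estimate under $d_k|\dt|\le k+1$. You even make explicit the step $m\le d_k$ that the paper uses implicitly when merging the $R_l$ and $T_l$ bounds into a single $2\,d_k^l/l!$ term.
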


From Lemma 2, we have the following theorem.

\begin{thm} Let $1\ge \e>0$ be such that
$4met\norm{H_2}\ge\e$.
The number $N$ of exponentials for the simulation of
$e^{-i(H_1+\cdots+H_m)t}$ with accuracy $\e$ is bounded by
$$N \le (2m-1)\; 5^{k-1}
\left\lceil \norm{H_1}t
\left( \frac { 4emt \norm{H_2} }{\e } \right)^{1/(2k)}
\frac {4me}3 \left( \frac 53 \right)^{k-1} \right\rceil,
$$
for any $k\in \naturals$, where
$\norm{H_m}\le\cdots\le\norm{H_2}\le\norm{H_1}$.
\end{thm}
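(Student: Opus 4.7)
The plan is to mirror the proof of Theorem 1 almost verbatim, substituting Lemma 2 for Lemma 1 and carefully tracking how the constant $d_k = \tfrac{m}{2}\,c_k$ enters the resulting bound. After normalizing $\mathcal{H}_j = H_j/\norm{H_1}$ and scaling the evolution time to $\tau = \norm{H_1}t$, I would set $M = |\dt|^{-1}$, partition $[0,\tau]$ into $\lceil M\tau\rceil$ subintervals of length at most $1/M$, and apply $S_{2k}(\mathcal{H}_1,\ldots,\mathcal{H}_m,1/M)$ on each. By Lemma 2 and the triangle inequality across subintervals,
\[
\bigl\| e^{-i(H_1+\cdots+H_m)t} - S_{2k}^{\lceil M\tau\rceil}(\mathcal{H}_1,\ldots,\mathcal{H}_m,1/M)\bigr\| \le M\norm{H_1}t \cdot \frac{4\norm{\mathcal{H}_2}}{(2k+1)!}\Bigl(\frac{d_k}{M}\Bigr)^{2k+1} = \frac{4t\norm{H_2}\,d_k^{2k+1}}{(2k+1)!}\,M^{-2k},
\]
and forcing this to be at most $\e$ produces a lower bound on $M$.

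Next I would simplify that lower bound using exactly the two ingredients that appeared in Theorem 1. Stirling gives $[(2k+1)!]^{-1/(2k)} \le e^{1+1/(2k)}/(2k+1)$, and because $d_k = \tfrac{m}{2}c_k$ together with $c_k^{1/(2k)} \le 2^{1+1/(2k)}$ one obtains $d_k^{1/(2k)} \le 2\,m^{1/(2k)}$. This extra $m^{1/(2k)}$ is precisely what upgrades the $4et\norm{H_2}/\e$ of Theorem 1 into $4emt\norm{H_2}/\e$ inside the $1/(2k)$-th power. Combined with $d_k/(2k+1) \le (2m/3)(5/3)^{k-1}$, this shows that
\[
M := \Bigl(\frac{4emt\norm{H_2}}{\e}\Bigr)^{1/(2k)}\,\frac{4em}{3}\Bigl(\frac{5}{3}\Bigr)^{k-1}
\]
is admissible --- exactly the quantity appearing inside the theorem's ceiling. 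A brief but necessary sanity check is the hypothesis $d_k/M \le k+1$ required by Lemma 2: since $d_k/(k+1) \le (4m/3)(5/3)^{k-1}$, this reduces to $e\,(4emt\norm{H_2}/\e)^{1/(2k)} \ge 1$, which holds precisely by virtue of the theorem's standing assumption $4met\norm{H_2} \ge \e$.

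Finally I would count exponentials. Suzuki's recursion assembles $S_{2k}$ from $5^{k-1}$ applications of $S_2$, and in the $m$-Hamiltonian case $S_2 = \prod_{j=1}^m e^{-iH_j\dt/2}\prod_{j=m}^1 e^{-iH_j\dt/2}$ uses only $2m-1$ exponentials after merging the two adjacent middle factors in $H_m$. Hence each subinterval costs at most $(2m-1)\,5^{k-1}$ exponentials, and multiplying by $\lceil M\norm{H_1}t \rceil$ delivers the claimed bound. The edge case $M\norm{H_1}t < 1$ is disposed of exactly as in Theorem 1, by a single application of $S_{2k}$ with $\dt = \norm{H_1}t$, whose error is still $\le\e$ by the preceding chain of inequalities and whose cost $(2m-1)\,5^{k-1}$ is absorbed by the ceiling. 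The only real obstacle is bookkeeping: confirming that the factor of $m$ produced by $d_k = (m/2)c_k$ propagates cleanly through Stirling so that the exponent inside the $1/(2k)$-th power ends up as $4emt\norm{H_2}/\e$ and not some nearby but slightly different expression; conceptually, the argument is a direct rerun of Theorem 1.
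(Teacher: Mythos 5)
Your proposal is correct and follows essentially the same route as the paper's proof: normalize by $\norm{H_1}$, apply Lemma 2 with the triangle inequality over $\lceil M\norm{H_1}t\rceil$ subintervals, simplify via Stirling and $d_k^{1/(2k)}\le 2m^{1/(2k)}$, verify the hypothesis $d_k|\dt|\le k+1$ from $4met\norm{H_2}\ge\e$, and count $(2m-1)5^{k-1}$ exponentials per subinterval, with the $M\norm{H_1}t<1$ case handled separately. The only cosmetic difference is that you define $M$ directly as the final simplified expression rather than as $(4emt\norm{H_2}/\e)^{1/(2k)}\,2e\,d_k/(2k+1)$ and then bounding it, which changes nothing of substance.
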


\begin{proof}
The proof is similar to that of Theorem 1.
Let $M= |\dt|^{-1}$. Then using Lemma 2 and
$\mathcal{H}_j = H_j / \norm{H_1}$, $j=1,\dots,m$, we obtain
\begin{equation*}
\begin{split}
\big\| e^{-i(H_1+\cdots +H_m)t} -
S_{2k}^{M\norm{H_1}t}(\mathcal{H}_1,\dots,\mathcal{H}_m, 1/M)\big\| &\le
M\norm{H_1} t \frac{4}{(2k+1)!}\norm{\mathcal{H}_2}
\left(\frac {d_k}{M}\right)^{2k+1} \\
&= 4 t \norm{H_2}  \,\frac{d_k^{2k+1}}{(2k+1)!}
\frac 1{M^{2k}}.
\end{split}
\end{equation*}
Recall that $d_k$ is defined in Lemma 2.
For accuracy $\e$ we obtain
$$M\ge \left(
\frac {4t\norm{H_2} d_k^{2k+1}}{\e (2k+1)!} \right)^{1/(2k)}.$$
We use the estimate (\ref{eq:factbound}). It is easy to check that
$$d_k^{1/(2k)} \le 2 m^{1/(2k)}.$$
Thus it suffices to take
$$M \ge \left( \frac {4emt\norm{H_2}}{\e } \right)^{1/(2k)}
\frac {2e\; d_k}{2k+1}.$$

So we define $M$ to be the lower bound of the expression above, i.e.,
$$
M:= \left( \frac { 4emt \norm{H_2} }{\e } \right)^{1/(2k)}
\frac {2e\; d_k}{2k+1}.$$
As in the proof of Theorem 1, it is straightforward to verify that
$M (k+1) \ge d_k$. Therefore,
the assumptions of Lemma 2 are satisfied for this value of $M$.

From the recurrence relation, we see that the number of required
exponentials to implement $S_{2k}$ in one subinterval
is no more than $(2m-1)\cdot5^{k-1}$.
Again we distinguish two cases for $M\norm{H_1}t$. We deal with
the case $M\norm{H_1}t<1$ in the same way we did in the proof of Theorem 1,
to conclude
$$N \le (2m-1)\cdot5^{k-1}.$$

If $M\norm{H_1}t\ge 1$, then the total number of required exponentials
is
$$N \le (2m-1)\cdot5^{k-1}\lceil M\norm{H_1}t\rceil.$$
Substituting the values of $M$ and $d_k$ we obtain
$$N \le (2m-1)\cdot5^{k-1}
\left\lceil
\norm{H_1} t
\left( \frac { 4emt \norm{H_2} }{\e } \right)^{1/(2k)}
\frac {4me}3 \left( \frac 53 \right)^{k-1} \right\rceil.
$$
This completes the proof.
\end{proof}

The reader may wish to recall Remark 1 that applies in the
case too.

\begin{Coro} If in addition to the  assumptions of Theorem 2
either of the following two conditions holds:
\begin{itemize}
\item $4met\norm{H_1} \ge 3$
\item $\e$ is sufficiently small
such that
$$\left( \ln \frac {4met\norm{H_1}}{5} \right)^2 -
2 \ln\frac 53 \ln \frac {4met\norm{H_2}}{\e} <0$$
\end{itemize}
then
the number of exponentials, $N$, for the simulation of
$e^{-i(H_1+\cdots+H_m)t}$ with accuracy $\e$ is bounded by
$$N \le 2\; (2m-1)\; 5^{k-1}
 \norm{H_1}t
\left( \frac { 4emt \norm{H_2} }{\e } \right)^{1/(2k)}
\frac {4me}3 \left( \frac 53 \right)^{k-1},
$$
for any $k\in \naturals$.
\end{Coro}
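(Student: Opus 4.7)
The plan is to remove the ceiling function from the bound of Theorem~2. Writing the expression inside the ceiling as
\[
A(k) := \norm{H_1}t \left(\frac{4emt\norm{H_2}}{\e}\right)^{1/(2k)} \frac{4me}{3}\left(\frac{5}{3}\right)^{k-1},
\]
and using the elementary inequality $\lceil x\rceil \le 2x$ for $x\ge 1$, the corollary follows once I establish $A(k)\ge 1$ for every $k\in \naturals$ under either of the two hypotheses.

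Under the first hypothesis, I would rewrite $A(k)$ as the product
\[
A(k) = \frac{4met\norm{H_1}}{3} \cdot \left(\frac{4emt\norm{H_2}}{\e}\right)^{1/(2k)} \cdot \left(\frac{5}{3}\right)^{k-1}
\]
and observe that each factor is at least $1$: the first by the assumption $4met\norm{H_1}\ge 3$, the second by the Theorem~2 hypothesis $4met\norm{H_2}\ge \e$, and the third since $k\ge 1$. This case is a routine factorization.

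Under the second hypothesis, the first factor above may be less than $1$, so I would pass to logarithms. Setting $a = \ln(4met\norm{H_1}/5)$, $b=\ln(4emt\norm{H_2}/\e)$, and $c=\ln(5/3)$, a short algebraic rearrangement (absorbing one factor of $\ln(5/3)$ from the $(k-1)$ exponent into the first summand) yields
\[
\ln A(k) = a + \frac{b}{2k} + ck.
\]
Treating $k$ as a continuous positive variable, the derivative vanishes at $k^* = \sqrt{b/(2c)}$, and substitution gives the minimum value $a + \sqrt{2bc}$. The hypothesis $a^2 - 2bc < 0$ is precisely the statement $\sqrt{2bc}>|a|$, so the minimum is strictly positive regardless of the sign of $a$. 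Since the continuous minimum is positive, $A(k)>1$ for every real $k>0$ and, in particular, for every $k\in\naturals$.

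The main obstacle is the second case: recognizing that the peculiar-looking inequality displayed in the corollary is exactly the discriminant condition arising from minimizing $\ln A(k)$ over a continuous $k$, and that this continuous minimization controls the discrete problem uniformly in $k\in\naturals$. Both arguments then combine to give $A(k)\ge 1$ for all natural $k$, and the bound of the corollary follows immediately from Theorem~2 via $\lceil A(k)\rceil \le 2A(k)$.
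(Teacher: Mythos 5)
Your proof is correct and follows essentially the same route as the paper: both reduce to showing the argument of the ceiling in Theorem~2 is at least $1$, handle the first hypothesis by the trivial factorization, and handle the second by observing that the displayed inequality is exactly the condition forcing $\ln A(k)>0$ for all $k>0$ (the paper phrases this as the negative discriminant of the quadratic $2k\ln A(k)=2k^2\ln\tfrac53+2k\,a+b$, while you minimize $a+b/(2k)+ck$ directly to get the equivalent $a+\sqrt{2bc}>0$). The final step $\lceil x\rceil\le 2x$ for $x\ge 1$ is identical in both.
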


\begin{proof}
From the assumption of Theorem 2 we have $4emt\norm{H_2} / \e \ge 1$.
Consider the argument of the ceiling function in the bound of Theorem 2.
It is greater than or equal to $1$, if $4met\norm{H_1} \ge 3$.
Otherwise,
we take its logarithm and multiply the resulting expression by $k$.
This gives the quadratic polynomial
$$2k^2 \ln\frac 53 + 2k \ln \frac {4met\norm{H_1}}{5} +
\ln \frac {4met\norm{H_2}}{\e}.$$
When $\e$ is sufficiently small and the discriminant is negative, i.e., when
$$\left( \ln \frac {4met\norm{H_1}}{5} \right)^2 -
2 \ln\frac 53 \ln \frac {4met\norm{H_2}}{\e} <0,$$
the polynomial is positive for all $k$. Hence, that argument of
the ceiling function in the bound of Theorem 2
is greater than $1$, for all $k\ge 1$.

In either case,
we use $\lceil x\rceil \le 2x$, for $x\ge 1$, to estimate $N$ from above.
\end{proof}

\section{Speedup}

Let us now deal with the cost for simulating the evolution
$e^{-i(\sum_{j=1}^m H_j)t}$.
Berry et al. \cite{Berry} show upper and lower bounds for the number of
required exponentials. We concentrate on upper bounds and improve the
estimates of \cite{Berry}.

We are interested in the number of
exponentials required by the splitting formula that approximates the
evolution with accuracy $\e$.
Recall that 
\begin{equation*} \label{eq:Nnew}
N_{\rm new}:=
 2\; (2m-1)\; 5^{k-1}
 \norm{H_1}t
\left( \frac { 4emt \norm{H_2} }{\e } \right)^{1/(2k)}
\frac {4me}3 \left( \frac 53 \right)^{k-1}
\end{equation*}
exponentials suffice for
error $\e$. 
The above estimate holds for $\e$ sufficiently small as Theorem~2 and 
Corollary~1 indicate.
The corresponding previously known estimate \cite{Berry} is
$$N_{\rm prev}=
m\; 5^{2k}\; (m\norm{H_1}t)^{1+\frac{1}{2k}}\left(\frac{1}{\epsilon}\right)^{\frac{1}{2k}},$$
where $H=\sum_{j=1}^l H_j$.

The ratio of the two estimates is
\begin{equation}
\frac{N_{\rm new}}{N_{\rm prev}} \leq
\frac {2}{3^k}
\left(\frac{4e\norm{H_2}}{\norm{H_1}}\right)^{1/2k}.
\end{equation}
So for large $k$ we have an improvement in the
estimate of the cost of the algorithm. On the other hand, if
$\norm{H_2}\ll\norm{H_1}$ we have an improvement in the estimate of
the cost the algorithm not just for large $k$ but for all $k$. This
is particularly significant when $k$ is small. For instance, $k=1$
for the Strang splitting $S_2$, which is frequently used in the
literature.

Let us now consider the optimal $k$, i.e., the one minimizing
$N_{\rm new}$, for a given accuracy $\e$.
It is obtaind from the solution of the equation
$$2k^2 \ln \frac {25}3 - \ln \frac{4emt\norm{H_2}}{\e} = 0.$$
Since we seek a positive integer $k^*_{\rm new}$
minimizing $N_{\rm new}$, we
set
$$k^*_{\rm new} := 
\max\left\{ {\rm round} \left(
\sqrt{\frac 12  \log_{25/3} \frac {4emt\norm{H_2}}{\e}}\,
\,\right), 1 \right\},$$
where ${\rm round}(x) = \lfloor x + 1/2 \rfloor$, $x\ge 0$.
We can avoid using the $\max$ function in the expression above by considering
$\e\le mt\norm{H_2}$. 
Then
the number of exponentials $N_{\rm new}$ satisfies
$$
N^*_{\rm new} \leq
\frac {8}{3}\, (2m-1)\, met \, \norm{H_1} \;
e^{2
\sqrt{\tfrac 12 \ln \tfrac {25}3 \ln \tfrac {4emt\norm{H_2}}{\e}}}.
$$

Berry et al. \cite{Berry} find
\begin{equation}
k^*_{\rm prev} =  {\rm round\ }\left(
\frac 12 \sqrt{\log_5 \frac{m\norm{H_1}t}{\e}+1}
\right),
\end{equation}
which minimizes $N_{\rm pre}$. For $k^*_{\rm prev}$ the number of
exponentials $N_{\rm prev}$ becomes
\begin{equation}\label{opt_1}
N^*_{\rm prev} = 2m^2\norm{H_1}t\; e^{2
\sqrt{
\ln 5 \ln \tfrac{m\norm{H_1}t}{\e}}}.
\end{equation}

As a final comparison with $N_{\rm prev}$ we have
$$\frac {N^*_{\rm new}}{N^*_{\rm prev}} \le
 \frac {8\,e}{3}\,
e^{2
\left(
\sqrt{\tfrac 12 \ln \tfrac {25}3 \ln \tfrac {4emt\norm{H_2}}{\e}}
-  \sqrt{
\ln 5 \ln \tfrac{m\norm{H_1}t}{\e}}\, \right)
}.
$$
Hence, there is an important difference between the previously
derived optimal $k$ and the one derived in the present paper. In
\cite{Berry}, the optimal $k$ depends on $\norm{H_1}$. More precisely,
we show that the optimal $k$ depends on $\norm{H_2}$, the second
largest norm of the Hamiltonians comprising $H$, which can be considerably
smaller than $\norm{H_1}$.

\section{Acknowledgments}

We are grateful to Joseph F. Traub and Henryk Wo\'zniakowski
as well as the referees for their
insightful comments and suggestions. This research was supported in
part by National Science Foundation.

\section{Appendix}

\begin{proof}[Proof of Lemma 2]
Unwinding the recurrence for $S_{2k}$ we see that
$$
S_{2k}(\mathcal{H}_1,\dots,\mathcal{H}_m,\dt) =
\prod_{\ell=1}^K S_2((\mathcal{H}_1,\dots,\mathcal{H}_m,z_{\ell}\dt)=
\prod_{\ell=1}^K \left[ \prod_{j=1}^m e^{-i\mathcal{H}_j z_\ell\dt/2}
\prod_{j=m}^1 e^{-i\mathcal{H}_j z_\ell\dt/2}\right],$$
where $K=5^{k-1}$ and each $z_\ell$ is defined according to the recursive
scheme, $\ell=1,\dots,K$. For the details, see the part of the text
that follows (\ref{eq:largeformula}). The bound (\ref{eq:qbound0}), namely,
$$
|z_\ell| \le \frac {4k}{3^k}\quad {\rm\ for\ all\ } \ell=1,\dots,K,
$$
holds independently of $m$, because it depends on the $k-1$st levels
of the recursion tree and not on the leaf,
$S_2((\mathcal{H}_1,\dots,\mathcal{H}_m,z_{\ell}\dt)$,
at which, the corresponding to $\ell$,
path ends.

In the expression of $S_2((\mathcal{H}_1,\dots,\mathcal{H}_m,z_{\ell}\dt)$
the sum of the magnitudes of the factors multiplying the Hamiltonials
in the exponents is
$m|z_\ell|\cdot |\dt| $, for all $\ell=1,\dots,K$.
Thus in the expression of $S_{2k}$ above,
the sum of the magnitudes of all factors multiplying the Hamiltonians
in the exponents is
$$\sum_{\ell=1}^K (m|z_\ell|\cdot|\dt|) \le 5^{k-1} m \frac {4 k}{3^k}
|\dt|.$$
Define
\begin{equation}\label{eq:qbm0}
d_k:= m \frac 43 k \left(\frac 53\right)^{k-1}\quad k\ge 1.
\end{equation}

Equivalently, one can view the expression for $S_{2k}$ above as a product of
exponentials of the form $e^{\mathcal{H}_j r_{j,n}\dt}$, where
$\sum_{n=1}^{N_j}r_{j,n}=1$, $j=1,\cdots,m$, and $N_j$ is the number
of occurrences of $\mathcal{H}_j$ in $S_{2k}$.
Recall that for $m=2$
we used $s_n$ to denote $r_{1,n}$ and $z_n$ to denote $r_{2,n}$.
With this notation and using (\ref{eq:qbm0}) we have
\begin{equation}\label{eq:qbm}
\sum_{j,n}|r_{j,n}| \le d_k.
\end{equation}
(Recall the derivation of (\ref{eq:qbound}).)

Expanding the factors of $S_{2k}$ in a power series individually,
and then carrying out the multiplications amongst them, we conclude
that $S_{2k}$ is given by an infinite sum whose terms have the form
\begin{equation}\label{eq:term}
\prod_{(j,n)} \frac{1}{\gamma_{j,n}!}\,
\mathcal{H}_j^{\gamma_{j,n}}\, [-i \,r_{j,n} \,\dt]^{\gamma_{j,n}}.
\end{equation}
The factors of these products are specified by the Hamiltonians
$H_j$ and the order of their occurrences after unwinding the
recurrence for $S_{2k}$, where $j=1,\dots,m$ and
$\gamma_{j,n}=0,1,2,\dots$, for all $n=1,\dots,N_j$.

Consider the terms that contain only $\mathcal{H}_1$ and, therefore,
have $\gamma_{j,n}=0$, for $n=1,\dots,N_j$ and $j=2,\dots,m$. The
sum of these terms is
\begin{equation}
\begin{split}
\sum_{\gamma_{j,n}=0 {\rm\ for\ } j\ne 1} \prod_{(j,n)}
\frac{1}{\gamma_{j,n}!}\, \mathcal{H}_j^{\gamma_{j,n}}\, [-i
\,r_{j,n} \,\dt]^{\gamma_{j,n}} =& \sum_{\gamma_{1,1}=\cdots =
\gamma_{1,N_1}= 0}^\infty \prod_{(1,n)} \frac{1}{\gamma_{1,n}!}\,
\mathcal{H}_1^{\gamma_{1,n}}\, [-i \,r_{1,n} \,\dt]^{\gamma_{1,n}} \\
=& \prod_{n=1}^{N_1}
\sum_{\gamma_{1,n}}\frac{1}{\gamma_{1,n}!}H_1^{\gamma_{1,n}}\, [-i
r_{1,n}\,\dt]^{\gamma_{1,n}} = \prod_{n=1}^{N_1}
e^{-i\mathcal{H}_1r_{1,n}\dt} \\
& = e^{-i\sum_{n}r_{1,n}H_1\dt} = e^{-i\mathcal{H}_1\dt}.
\end{split}
\end{equation}

On the other hand,
\begin{equation}
e^{-i\sum_{j=1}^m \mathcal{H}_j\dt} = I + \left(-i\sum_{j=1}^m
\mathcal{H}_j\dt\right) + \cdots + \frac{1}{k!}\left(-i\sum_{j=1}^m
\mathcal{H}_j\dt\right)^k +\cdots,
\end{equation}
and the terms that contain only $\mathcal{H}_1$ have sum
\begin{equation}
\sum_{k=0}^{\infty} \frac{1}{k!}\mathcal{H}_1^k (-i\dt)^k =
e^{-i\mathcal{H}_1\Delta t}.
\end{equation}

Let us now consider the error bound in (\ref{suzuki2}). The sum of
the terms with only $\mathcal{H}_1$ in $S_{2k+1}$ and
$\exp(\sum_{j=1}^m H_j\dt)$ is the same and cancels out when we
subtract one from the other. Moreover, in $\exp(-i\sum_{j=1}^m
\mathcal{H}_j\dt)-S_{2k}(\dt)$ we know that the terms of order up to
$2k$ also cancel out, see Eq. (\ref{suzuki2}). From this we conclude
that the error is proportional to
$\norm{\mathcal{H}_2}|\dt|^{2k+1}$.

Consider
\begin{equation}
\exp(-i(\mathcal{H}_1+\cdots+\mathcal{H}_m)\dt)-
S_{2k}(\mathcal{H}_1,\dots,\mathcal{H}_m,\dt) =
\sum_{l=2k+1}^{\infty} \bigl[ R_{l}(\dt)-T_{l}(\dt)\bigr],
\end{equation}
where $R_{l}(\dt)$ is the sum of all terms in
$\exp(-i(\mathcal{H}_1+\cdots+ \mathcal{H}_m)\dt)$ corresponding to
$\dt^{l}$ and $T_{l}(\dt)$ is the sum of all terms in $S_{2k}$
corresponding to $\dt^l$. We can ignore the terms in $T_{l}(\dt)$
and $R_{l}(\dt)$ that contain only $\mathcal{H}_1$ (and not
$\mathcal{H}_2$) as a factor.

Then
\begin{equation}
\norm{R_{l}(\dt)} = \left\| \frac{1}{l!}\left(\sum_{j=1}^m
\mathcal{H}_j\dt\right)^l - \frac{1}{l!}\mathcal{H}_1^l\dt^l\right\|
\leq \frac{m^{l}}{l!}\norm{\mathcal{H}_2}|\dt|^{l},
\end{equation}
because there are $m^{l}-1$ terms in $R_l$ and each norm is at most
$\frac{1}{l!}\norm{\mathcal{H}_2}|\dt|^{l}$.

From (\ref{eq:term}) we have
\begin{equation}
T_{l}(\dt) = \sum_{\sum \gamma_{(j,n)}=l}
\frac{\prod_{(j,n)}r_{j,n}^{\gamma_j,n}}{\prod_{(j,n)}\gamma_{j,n}!}
\prod_{(j,n)}\mathcal{H}_j^{\gamma_{j,n}}\dt^{l},
\end{equation}
where $\sum_{n}\gamma_{1,n} \neq l$, i.e., there is no terms
containing only $\mathcal{H}_1$. So,
$\norm{\prod_{(j,n)}\mathcal{H}_j^{\gamma_{j,n}}}\leq
\norm{\mathcal{H}_2}$, and
\begin{equation}\label{1}
\norm{T_{l}(\dt)} \leq \sum_{\sum \gamma_{j,n}=l}
\frac{\prod_{j,n}|r_{j,n}|^{\gamma_j,n}}{\prod_{j,n}\gamma_{j,n}!}\norm{\mathcal{H}_2}
|\dt|^{l}.
\end{equation}
To calculate the coefficients of the sum, we consider
\begin{equation}\label{2}
\begin{split}
\prod_{(j,n)} \exp(|r_{j,n}\dt|) &=
\prod_{(j,n)}\sum_{\gamma_{j,n}=0}^{\infty}
\frac{1}{\gamma_{j,n}!} |r_{j,n}\dt|^{\gamma_{j,n}}\\
&=\sum_{l=0}^{\infty} \sum_{\sum \gamma_{j,n}=l}
\frac{\prod_{j,n}|r_{j,n}|^{\gamma_j,n}}{\prod_{j,n}\gamma_{j,n}!}
|\dt|^{l}.
\end{split}
\end{equation}
Hence the coefficient of $|\dt|^l$ in (\ref{1}) is equal to that in
(\ref{2}). Also
\begin{equation}
\prod_{j,n} \exp(|r_{j,n}\dt|) = \exp(\sum_{j,n}|r_{j,n}\dt|).
\end{equation}

From (\ref{eq:qbm}) we obtain
\begin{equation}
\norm{T_{l}(\dt)} = \frac{d_k^l}{l!}\norm{\mathcal{H}_2}|\dt|^{l}.
\end{equation}
Therefore,
\begin{equation}
\begin{split}
\norm{\exp(\sum_{j=1}^m \mathcal{H}_j\dt) - S_{2k}(\dt)}&\leq \sum_{l=2k+1}^{\infty}\norm{R_l(\dt)}+\norm{T_l(\dt)}\\
\leq
&2 \sum_{l=2k+1}^{\infty}\frac{d_k^l}{l!}
\norm{\mathcal{H}_2} |\dt|^{l}\\
=&2\norm{\mathcal{H}_2}\sum_{l=2k+1}^{\infty}\frac{1}{l!}|d_k\dt|^{l} \\
&\leq \frac{2}{(2k+1)!}\norm{\mathcal{H}_2}|d_k\dt|^{2k+1}
\left( 1 -\frac {d_k|\dt|}{2k+2}\right)^{-1} \\
&\le
\frac{4}{(2k+1)!}\norm{\mathcal{H}_2} |d_k\dt|^{2k+1},
\end{split}
\end{equation}
where the last two inequalities follow from
the assumption $d_k|\dt|\le k+1$ and
an estimate of the tail of
the Poisson distribution; see, e.g., \cite[Thm 1]{possion}.
\end{proof}

\end{document}